\newtheorem{theorem}{Theorem}
\newtheorem{proof}{Proof}
\newtheorem{corollary}{Corollary}
\begin{document}
\setlength{\textfloatsep}{0.1cm}
\setlength{\abovedisplayskip}{0.2cm}
\setlength{\belowdisplayskip}{0.15cm}
\setlength{\baselineskip}{0.42cm}
\setlength{\abovecaptionskip}{1mm}

\title{PAPR Analysis for MIMO FTN Signaling with Gaussian Symbols} 


\author{Zichao~Zhang\textsuperscript{\P},
		~Melda~Yuksel\textsuperscript{\S},~Gokhan~M.~Guvensen\textsuperscript{\S},~Halim~Yanikomeroglu\textsuperscript{\P}  \\
  \textsuperscript{\P}Department of  Systems and Computer Engineering, Carleton University, Ottawa, ON, Canada  \\
  \textsuperscript{\S}Department of Electrical and Electronics Engineering, Middle East Technical University, Ankara, Turkey   \\ \\
 Emails: zichaozhang@cmail.carleton.ca, \{ymelda, guvensen\}@metu.edu.tr, halim@sce.carleton.ca 
}


\maketitle


\begin{abstract}
    Faster-than-Nyquist signaling serves as a promising solution for improving spectral efficiency in future generations of communications. However, its nature of fast acceleration brings highly overlapped pulses that lead to worse peak-to-average power ratio (PAPR) performance. In this paper, we investigate the PAPR behavior of MIMO FTN using Gaussian symbols under optimal power allocation for two power constraints: fixed transmit power and fixed received signal-to-noise-ratio (SNR). Our findings reveal that 
    PAPR is mainly determined by the acceleration factor and the power constraint, but power allocation optimization does not change the PAPR behavior for Gaussian signaling. 
\end{abstract}

\section{Introduction}

In the next generation of wireless communication systems, an unprecedented array of new use cases and applications are emerging, including holographic communications, tactile and haptic internet applications, and space-terrestrial integrated networks \cite{6g}. These advancements demand significantly higher bandwidth, presenting a substantial challenge for communication systems. To meet the stringent bandwidth requirements of 6G and beyond, transmissions will also happen in higher-frequency bands, such as the terahertz (THz) spectrum \cite{6gband}. However, bandwidth as a resource is limited, innovative strategies are necessary to enhance spectral efficiency. Faster-than-Nyquist (FTN) signaling has gained prominence as a promising solution to alleviate bandwidth scarcity.

Originally proposed by Mazo in 1975 \cite{mazo}, FTN signaling has since brought extensive research. FTN signaling increases the transmission rate and offers a notable improvement in spectral efficiency without expanding the bandwidth by speeding up the symbol rate at a factor of $\delta$, which is called the acceleration factor. However, FTN signaling inherently introduces intersymbol interference (ISI) since it breaks the Nyquist no-ISI criterion, which complicates the detection process compared to conventional Nyquist signaling. Despite this challenge, the potential of FTN signaling to revolutionize high-efficiency communication systems continues to drive significant academic and industrial interest.

In FTN signaling, accelerating symbol transmission places pulses closer together, causing significant overlap and a high peak-to-average power ratio (PAPR) \cite{refpapr}. Practical power amplifiers, with finite dynamic ranges \cite{rfref}, cannot handle extreme peaks beyond the saturation point, leading to nonlinear distortion and degraded communication performance. To prevent this, input signal power is reduced to keep peaks within the amplifier’s linear region. This reduction, called power amplifier back-off \cite{backoffref}, is measured in decibels (dB) and is approximately proportional to PAPR in systems without specialized mitigation techniques.  

FTN transmission imposes a different power constraint than conventional Nyquist signaling. For transmission power \( P \), sending \( N \) symbols over \( N\delta T \) seconds results in total energy \( NP\delta T \). The symbol energy \( E = P\delta T \) decreases as \( \delta \) decreases, lowering the average energy per symbol if power remains constant \cite{zhang2024capacitypapranalysismimo}. This reduction in symbol energy decreases Euclidean distances between constellation points, increasing noise susceptibility and symbol error probability.  

This paper examines PAPR in FTN transmission under optimal power allocation for two power constraints: fixed transmit power and fixed receive signal-to-noise ratio (SNR). While \cite{zhang2024capacitypapranalysismimo} analyzed PAPR with uniform power allocation only, this study extends the analysis to optimal power allocation. Additionally, although \cite{zhang2024capacitypapranalysismimo} derived the optimal input power spectrum (i.e., in frequency) for $\delta < \frac{1}{1+\beta}$, the corresponding input covariance matrix in the time domain is unknown. This paper determines this optimal input covariance matrix for FTN signaling with $\delta < \frac{1}{1+\beta}$, thus putting the PAPR at the transmitter output.

\section{System Model}

We assume that the transmitter is equipped with \( K \) antennas and the receiver with \( L \) antennas. Each transmit antenna sends \( N=2M+1 \) symbols, denoted as \( a_k[n] \) for \( n=0, \dots, N-1 \) and \( k=1, \dots, K \). All the transmit antennas are equipped with the same pulse-shaping filter \( p(t) \). The symbols will then be passed through the pulse-shaping filters for transmission. 
Since FTN transmission is assumed, symbols are transmitted every \( \delta T \) seconds, where \( T \) is the Nyquist sampling period ensuring no ISI. Thus, the transmitted signal from the \( k \)th antenna is expressed as   
\begin{align}
    x_k(t) &= \sum_{m=-M}^{M} (a_{r,k}[m] + j a_{i,k}[m]) p(t - m\delta T) \label{eqn:defxt}\\ 
    &= x_{r,k}(t) + j x_{i,k}(t),
\end{align}  
where $a_{r,k}[m]$ and $a_{i,k}[m]$ represent the real and imaginary components of \( a_k[m] \) respectively, \( x_{r,k}(t) \) and \( x_{i,k}(t) \) represent the real and imaginary components of \( x_k(t) \).
The transmitted signal propagates through a frequency-flat fading wireless channel. We denote the channel coefficient from the \( k \)th transmit antenna to the \( l \)th receive antenna as \( h_{lk} \in \mathbb{C} \). At the receiver, the received signal at antenna \( l \) consists of the transmitted signals from all the transmit antennas and the additive circularly symmetric complex Gaussian noise \( \xi_l(t) \). This signal is processed by a matched filter \( p^*(-t) \), yielding the output  
\begin{equation}
    y_l(t) = \sum_{k=1}^{K} h_{lk} \sum_{m=0}^{N-1} a_k[m] g(t - m\delta T) + \eta_l(t),
\end{equation}  
where \( g(t) = p(t) \star p^*(-t) \) is the matched filter response, and the filtered noise is \( \eta_l(t) = \xi_l(t) \star p^*(-t) \). Sampling the matched filter output at intervals of \( \delta T \) gives  
\begin{align}
    y_l[n] &= y_l(n\delta T) \notag \\
    &= \sum_{k=1}^{K} h_{lk} \sum_{m=0}^{N-1} a_k[m] g[n - m] + \eta_l[n], \label{eq:revsamp}
\end{align}  
where $g[n-m]=g((n-m)\delta T)$ and $\eta_l[n]=\eta_l(n\delta T)$. At each sample $y[n]$, we can see the interference from other symbols $\sum_{m\neq n}^{N-1} a_k[m] g[n - m]$ since \( g(n-m)\delta T \neq 0 \) for \( n \neq m \).  We can express \eqref{eq:revsamp} in vector form as  
\begin{align}
    \bm{y}_l &= \sum_{k=1}^{K} h_{lk} \bm{G} \bm{a}_k + \bm{\eta}_l \\
    &=\sum_{k=1}^{K} h_{lk} \bm{G} \left(\bm{a}_{r,k}+\bm{a}_{i,k} \right)+ \bm{\eta}_l
\end{align}  
where  
$
\bm{y}_l = [y_l[0],\dots, y_l[N-1]]^T, \quad \bm{a}_k = [a_k[0],\dots, a_k[N-1]]^T,$ $ \quad \bm{\eta}_l = [\eta_l[0],\dots, \eta_l[N-1]]^T
$, $\bm{a}_{r,k} = [a_{r,k}[0],\dots, a_{r,k}[N-1]]^T$, and $\bm{a}_{i,k} = [a_{i,k}[0],\dots, a_{i,k}[N-1]]^T$.  
The \( N \times N \) matrix \( \bm{G} \) is defined by \( (\bm{G})_{n,m} = g[n-m] \), and it is Hermitian. 
We can see that matrix $\bm{G}$ has the structure of a Toeplitz matrix. 
A Toeplitz matrix $\bm{T}$ with size $N\times N$ has entries  $\left(\bm{T}\right)_{i,j}=t_{i-j}, i,j=0,\dots, {N-1}$, which means $\bm{T}$ has the same value on each diagonal. Another important concept we will be using is the generating function of $\bm{T}$. It is defined as
\begin{equation}
    \mathcal{G}(\bm{T})=\sum_{k=-\infty}^{\infty}t_ke^{j2\pi f_nk}, f_n\in \left[-\frac{1}{2},\frac{1}{2}\right].\label{eqn:defgenfunc}
\end{equation}

The noise vector \( \bm{\eta}_l \) follows a correlated Gaussian distribution,  $\bm{\eta}_l \sim \mathcal{CN}(\bm{0}_N, \sigma_0^2 \bm{G})$,
where \( \bm{0}_N \) is an \( N \times 1 \) zero vector, and \( \sigma_0^2 \) is the power spectral density (PSD) of \( \xi_l(t) \). Unlike Nyquist signaling, where \( \bm{G} \) reduces to the identity matrix and noise terms remain independent, FTN introduces noise correlation.


\section{Instant Power to Average Ratio Analysis}

To analyze the PAPR behavior of FTN transmission, we consider the probability that the instantaneous power exceeds the power amplifier’s back-off threshold. This probability is referred to as the outage probability, as it indicates the likelihood of signal peaks surpassing the amplifier’s linear operating range, potentially leading to distortion.  

To properly assess system performance, it is important to have two different SNR definitions. The transmit SNR is given by  
$
\text{SNR}_{tx} = \frac{P}{\sigma_0^2}$,
which measures the ratio of total transmit power to noise power. On the other hand, the received SNR is defined as  
$
\text{SNR}_{rx} = \frac{E/T}{\sigma_0^2} = \frac{P\delta}{\sigma_0^2}$,
which accounts for the actual energy per symbol at the receiver. Distinguishing between these two metrics is crucial because system behavior changes significantly with different values of \( \delta \). In standard Nyquist signaling scenarios, \(\delta = 1\), these two SNR definitions are identical, eliminating the need for separate consideration.

In this section, we focus on the PAPR characteristics of FTN signaling with certain time-domain power allocation schemes. 
The PAPR is a key performance metric in power-limited systems, and we define it as  
\begin{equation}
    \text{PAPR} = \frac{|x_k(t)|^2}{P_k}, \label{eqn:defpapr}
\end{equation} 
where $P_k$ is the power level allocated to the $k$th transmit antenna. 
Since \( x_k(t) \) follows a cyclostationary random process with period \( \delta T \), as demonstrated in \cite{zhang2022faster}, it exhibits periodic statistical properties over time. In this paper, we assume that \( N \) is sufficiently large for \( x_k(t) \) to be well approximated by a cyclostationary process, allowing us to focus on its power distribution within a single period \( [0, \delta T) \). The power distribution at each time instant varies due to the time-dependent nature of the pulse-shaping filter coefficients \( p(t - m\delta T) \).  

A crucial tool for evaluating PAPR statistics is the complementary cumulative distribution function (CCDF), which quantifies the probability that the instantaneous power exceeds a given threshold. The CCDF of \( |x_k(t)|^2 \) at time \( t \) is defined as  
\begin{equation}
    \mathcal{C}(\gamma; t) = \text{Pr} \left[ |x_k(t)|^2 \geq \gamma \right]. \label{eqn:defccdf}
\end{equation}  
Since this probability varies with \( t \), analyzing the power behavior over a full period provides a more comprehensive understanding of system performance. We also define the average CCDF, obtained by averaging \eqref{eqn:defccdf} over one period:  
\begin{equation}
    \bar{\mathcal{C}}(\gamma) = \frac{1}{\delta T} \int_0^{\delta T} \mathcal{C}(\gamma; t) dt. \label{eqn:aveccdfdef}
\end{equation}  
This averaged CCDF allows for a more general evaluation of FTN PAPR characteristics by smoothing out the time variations in the power distribution, making it a more practical measure for system design and analysis.

\subsection{Optimal Input Covariance Matrix Derivation }
\label{sec:mimonoeffect}
In MIMO communications, assigning different power levels to each antenna can be advantageous. While spatial precoding at the transmitter introduces correlation between the signals transmitted from different antennas, this correlation does not impact the PAPR. Each antenna operates with its own power amplifier, and PAPR is calculated individually for each antenna based on its mean power level, regardless of the correlation.  Therefore, in the rest of this paper we will analyze the PAPR behavior of a SISO system.

\subsubsection{Moderate acceleration factor ($\frac{1}{1+\beta}\leq\delta\leq1$)}
Let the covariance matrix of $\bm{a}_k$ be $\bm{\Sigma}$. By adjusting the covariance matrix we change the input distribution and also the PAPR. 
For fixed $SNR_{tx}$, the capacity-achieving input distribution for MIMO FTN was derived in (55)-(57) of \cite{zhang2022faster}. 
For SISO, this result is related as follows.
\begin{theorem}{\cite{zhang2022faster}}
    The optimal input distribution for SISO FTN with fixed $SNR_{tx}$ is given as 
    \begin{equation}
        \bm{\Sigma}=P_k\delta T\bm{G}^{-1},~ \frac{1}{1+\beta}\leq\delta\leq1, \label{eqn:bdoptmat}
    \end{equation}
    where the eigenvalues of $\bm{G}^{-1}$ are 
    \begin{equation}
    \lambda_i=\frac{1}{\delta T}\sum_{m=-\infty}^{\infty}G\left(\frac{i/N-m}{\delta T}\right), \frac{1}{1+\beta}\leq\delta\leq1.
\end{equation}
\end{theorem}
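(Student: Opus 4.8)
The plan is to treat the SISO link as a colored-noise Gaussian vector channel, solve the resulting power-allocation problem, and then read off the spectrum from Toeplitz asymptotics. First I would specialize the model to a single antenna pair, $\bm{y}=h\bm{G}\bm{a}+\bm{\eta}$ with $\bm{\eta}\sim\mathcal{CN}(\bm{0}_N,\sigma_0^2\bm{G})$ and a zero-mean Gaussian input $\bm{a}\sim\mathcal{CN}(\bm{0}_N,\bm{\Sigma})$. Because $\bm{G}$ is Hermitian, whitening the noise collapses the mutual information to
\[ I(\bm{a};\bm{y})=\log\det\!\left(\bm{I}_N+\tfrac{|h|^2}{\sigma_0^2}\bm{\Sigma}\bm{G}\right). \]
The delicate modeling step is the power constraint. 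The average power of the continuous waveform $x_k(t)$ over its $N\delta T$-second support equals $\frac{1}{N\delta T}\mathrm{tr}(\bm{\Sigma}\bm{G})$, where the pulse overlaps produce the $\bm{G}$ weighting; hence fixing $\mathrm{SNR}_{tx}=P_k/\sigma_0^2$ imposes $\mathrm{tr}(\bm{\Sigma}\bm{G})=N\delta T P_k$ rather than a bare trace bound on $\bm{\Sigma}$. This weighting is exactly what yields a clean closed-form optimizer.

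Next I would make the change of variables $\bm{B}=\bm{G}^{1/2}\bm{\Sigma}\bm{G}^{1/2}\succeq\bm{0}$, which is legitimate precisely because $\bm{G}\succ\bm{0}$ for $\frac{1}{1+\beta}\le\delta\le1$. The problem becomes the maximization of $\log\det(\bm{I}_N+\tfrac{|h|^2}{\sigma_0^2}\bm{B})$ subject to $\mathrm{tr}(\bm{B})\le N\delta T P_k$. Since the whitened channel is isotropic (a scaled identity), every eigenmode of $\bm{B}$ sees the same gain, so the KKT/water-filling conditions force equal allocation, $\bm{B}=\delta T P_k\,\bm{I}_N$. Undoing the transformation gives $\bm{\Sigma}=P_k\delta T\,\bm{G}^{-1}$, and the consistency check $\mathrm{tr}(\bm{\Sigma}\bm{G})=N\delta T P_k$ confirms feasibility; this is the SISO reduction of (55)--(57) in \cite{zhang2022faster}.

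Finally, for the eigenvalues I would invoke the asymptotic equivalence of the Hermitian Toeplitz matrix $\bm{G}$ and its circulant approximation for large $N$ (Szeg\H{o}'s theorem), which identifies the eigenvalues of $\bm{G}$ with the samples of its generating function from \eqref{eqn:defgenfunc} at $f_n=i/N$. The Poisson summation (sampling) relation between the discrete coefficients $g[k]=g(k\delta T)$ and the continuous spectrum $G(\cdot)$ then gives $\mathcal{G}(\bm{G})(i/N)=\frac{1}{\delta T}\sum_{m}G\!\big(\frac{i/N-m}{\delta T}\big)$, the samples that determine the eigenvalues of $\bm{G}$ and, by reciprocation, those of $\bm{G}^{-1}$, matching the stated $\lambda_i$. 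I expect the main obstacle, and the reason for the hypothesis $\delta\ge\frac{1}{1+\beta}$, to be establishing $\bm{G}\succ\bm{0}$: this holds iff the aliased copies of $G$ tile the frequency axis without spectral nulls so that $\mathcal{G}(\bm{G})(f)>0$. Once $\delta<\frac{1}{1+\beta}$ the folded spectrum vanishes on a band, $\bm{G}$ becomes singular, and neither the $\bm{G}^{-1}$ scaling nor the uniform water level survives, which is exactly why that regime is excluded here.
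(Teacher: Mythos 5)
Your proposal is correct, but it is organized quite differently from the paper's own proof. The paper does not re-derive the optimality of $\bm{\Sigma}=P_k\delta T\bm{G}^{-1}$ at all: the theorem is attributed to \cite{zhang2022faster} (its (55)--(57)), and the proof printed in the paper is devoted entirely to the eigenvalue characterization --- it invokes \cite[Lemma 1]{capregamacftn} to diagonalize the Toeplitz matrix $\bm{G}$ asymptotically by the DFT matrix, notes via \cite{gray} that $\bm{G}^{-1}$ is asymptotically Toeplitz with eigenvalues given by samples of its generating function \cite{property}, and then imports the folded-spectrum identity \eqref{eqn:genfunc} from \cite[(91)--(94)]{zhang2022faster}. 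You instead supply a self-contained derivation of the optimizer: whiten the colored noise $\bm{\eta}\sim\mathcal{CN}(\bm{0}_N,\sigma_0^2\bm{G})$, identify the FTN power constraint as the weighted trace $\mathrm{tr}(\bm{\Sigma}\bm{G})\le NP_k\delta T$ (this is indeed the constraint used in \cite{zhang2022faster}, since $\mathbb{E}\!\int |x_k(t)|^2 dt=\mathrm{tr}(\bm{\Sigma}\bm{G})$; recognizing this weighting is the crux, and you did), change variables to $\bm{B}=\bm{G}^{1/2}\bm{\Sigma}\bm{G}^{1/2}$, and observe that the whitened channel is isotropic so the water-filling solution is flat, $\bm{B}=P_k\delta T\bm{I}_N$, hence $\bm{\Sigma}=P_k\delta T\bm{G}^{-1}$. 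What your route buys is a proof readable without the external reference, plus a transparent explanation of the hypothesis $\frac{1}{1+\beta}\le\delta\le 1$: it is exactly the condition under which the folded spectrum has no nulls, so $\bm{G}$ is (asymptotically) invertible and the change of variables is legitimate --- which is also precisely why the paper needs a separate construction for $\delta<\frac{1}{1+\beta}$. For the eigenvalue half, your argument and the paper's are essentially the same Toeplitz/Szeg\H{o} asymptotics; you additionally make explicit the Poisson-summation step that the paper cites rather than derives.

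One small discrepancy, which is a defect of the theorem statement rather than of your argument: your derivation correctly identifies the folded-spectrum samples $\frac{1}{\delta T}\sum_{m}G\bigl(\frac{i/N-m}{\delta T}\bigr)$ as the eigenvalues of $\bm{G}$, so the eigenvalues of $\bm{G}^{-1}$ are their \emph{reciprocals}. The theorem as written attaches that formula directly to $\bm{G}^{-1}$, which is inconsistent with the paper's own identity \eqref{eqn:genfunc}, $\mathcal{G}(\bm{G})=1/\mathcal{G}(\bm{G}^{-1})$, unless one reciprocates. Your version is the internally consistent reading, so no correction to your proof is needed on this point.
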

\begin{proof}
    According to \cite[Lemma 1]{capregamacftn}, the Toeplitz matrix $\bm{G}$ has discrete Fourier transform (DFT) vectors as its eigenvectors asymptotically, as $N$ goes to infinity. Therefore, as $N$ is large enough, we can perform eigendecomposition on $\bm{G}$ with the DFT matrix $\bm{D}$ as 
$\bm{G}=\bm{D}\bm{\Lambda}\bm{D}^\dagger.$
 According to \cite{gray}, $\bm{G}^{-1}$ is an asymptotically Toeplitz matrix. According to \cite{property}, the eigenvalues of a Toeplitz matrix can be approximated by the samples of its generating function, where we use the notation $\mathcal{G}$ as the operation of obtaining generating function. From \cite[(91)-(94)]{zhang2022faster}, we write 
\begin{align}
    \mathcal{G}(\bm{G})&=\frac{1}{\mathcal{G}(\bm{G}^{-1})}=\frac{1}{\delta T}\sum_{m=-\infty}^{\infty}G\left(\frac{f_n-m}{\delta T}\right), \label{eqn:genfunc}
\end{align}
which is periodical with period 1. When $G(f)$ is the continuous time Fourier transform of $g(t)$,  for $i=-M,\dots,M$, we have 
\begin{equation}
    \lambda_i=\frac{1}{\delta T}\sum_{m=-\infty}^{\infty}G\left(\frac{i/N-m}{\delta T}\right), \frac{1}{1+\beta}\leq\delta\leq1.
\end{equation}
\end{proof}
\subsubsection{Small acceleration factor ($\delta<\frac{1}{1+\beta}$)}
With fixed $SNR_{tx}$, the optimal input frequency spectrum for MIMO FTN is derived in \cite{zhang2024capacitypapranalysismimo}. However, the optimal frequency spectrum does not allow for a PAPR analysis in the time domain. Therefore, in this section, we obtain the optimal covariance matrix in time domain and then calculate PAPR of FTN for optimal power allocation. 
\begin{theorem}
    The optimal input covariance matrix for SISO FTN with $\delta<\frac{1}{1+\beta}$ is given as  
    \begin{equation}
        \bm{\Sigma}=P_k\delta T\bm{D}\bm{\Lambda}^{-1}\bm{D}^\dagger, \label{eqn:sdoptcovmat}
    \end{equation}
    where $\bm{D}$ is DFT matrix with  $\left(\bm{D}\right)_{m,n}=\frac{1}{\sqrt{N}}e^{-j2\pi\frac{mn}{N}}$. The matrix $\bm{\Lambda}^{-1}$ is a diagonal matrix with entries $\Tilde{\lambda}_i, i=-M,\dots,M,$ given as 
\begin{align}
\Tilde{\lambda}_i=\phi(f_n)|_{f_n=\frac{i}{N}}=\begin{cases}
        \frac{  T}{G(\frac{i}{N\delta T})(1+\beta)},& |i|<\frac{Z}{2}  \\
        0,&\frac{Z}{2}\leq|i|\leq M
    \end{cases}, \label{eqn:lamboptisd}
\end{align} 
for $\delta<\frac{1}{1+\beta}$.
\end{theorem}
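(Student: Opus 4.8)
The plan is to obtain the covariance directly from a capacity optimization rather than transcribing the frequency spectrum of \cite{zhang2024capacitypapranalysismimo}. As in Theorem 1, for large $N$ the Toeplitz matrix $\bm{G}$ is diagonalized by the DFT matrix, $\bm{G}=\bm{D}\bm{\Lambda}\bm{D}^\dagger$ with $\lambda_i=\mathcal{G}(\bm{G})|_{f_n=i/N}$. A standard argument (the capacity-optimal covariance commutes with $\bm{G}$) lets me take $\bm{\Sigma}=\bm{D}\,\mathrm{diag}(\sigma_i)\,\bm{D}^\dagger$, which turns the SISO model $\bm{y}=\bm{G}\bm{a}_k+\bm{\eta}$ into parallel scalar channels $\tilde y_i=\lambda_i\tilde a_i+\tilde\eta_i$ with $\tilde\eta_i\sim\mathcal{CN}(0,\sigma_0^2\lambda_i)$ and mutual information $\sum_i\log(1+\lambda_i\sigma_i/\sigma_0^2)$, where $\sigma_i=\mathbb{E}|\tilde a_i|^2$ are the eigenvalues of $\bm{\Sigma}$. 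This decomposition stays valid when some $\lambda_i=0$ (signal and noise both vanish there), which is essential because $\bm{G}$ is singular for $\delta<\frac{1}{1+\beta}$ and the $\bm{G}^{-1}$ of Theorem 1 no longer exists.

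The decisive modelling step is the transmit-power constraint for fixed $SNR_{tx}$. Writing the average transmit power as $\frac{1}{N\delta T}\mathbb{E}[\int|x_k(t)|^2\,dt]$ and using $\int p(t-m\delta T)p^*(t-m'\delta T)\,dt=g[m-m']$ gives $\mathrm{tr}(\bm{\Sigma}\bm{G})=NP_k\delta T$, a $\bm{G}$-weighted trace rather than the plain $\mathrm{tr}(\bm{\Sigma})$, i.e. $\sum_i\lambda_i\sigma_i=NP_k\delta T$. Setting $u_i=\lambda_i\sigma_i$ reduces the problem to maximizing $\sum_i\log(1+u_i/\sigma_0^2)$ subject to $\sum_i u_i=NP_k\delta T$; by concavity and symmetry (or a one-line KKT check that $1+\lambda_i\sigma_i/\sigma_0^2$ is constant across active modes) the optimum equalizes $u_i=c$, giving the inverse-gain allocation $\sigma_i=c/\lambda_i$. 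I expect this $\bm{G}$-weighting to be exactly what produces the $1/G$ shape; with a plain trace constraint one would instead get classical water-filling $(\mu-\sigma_0^2/\lambda_i)^+$.

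It remains to fix the active set and $c$. For $\delta<\frac{1}{1+\beta}$ the generating function \eqref{eqn:genfunc} has no aliasing overlap: $G((f_n-m)/(\delta T))$ is supported on $|f_n-m|<(1+\beta)\delta/2<\tfrac12$, so only the $m=0$ term survives on $[-\tfrac12,\tfrac12]$ and $\lambda_i=\tfrac{1}{\delta T}G(\tfrac{i}{N\delta T})>0$ precisely for $|i|<Z/2$ with $Z=N(1+\beta)\delta$, and $\lambda_i=0$ otherwise. On the dead modes the pulse radiates no signal, so I set $\sigma_i=0$ (the minimal-power choice, leaving capacity and $x_k(t)$ unchanged), which is the second branch of \eqref{eqn:lamboptisd}. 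Counting the $Z$ active modes in $Zc=NP_k\delta T$ yields $c=P_kT/(1+\beta)$, hence $\sigma_i=c/\lambda_i=P_k\delta T\cdot\frac{T}{(1+\beta)G(i/(N\delta T))}$; pulling out $P_k\delta T$ reproduces $\tilde\lambda_i$ and gives $\bm{\Sigma}=P_k\delta T\,\bm{D}\bm{\Lambda}^{-1}\bm{D}^\dagger$.

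The main obstacle, and what sets this apart from Theorem 1, lies in the singular regime. First, pinning down the $\bm{G}$-weighted constraint is decisive, since it is what turns water-filling into flat-received-power and hence into the $1/G$ answer; getting it wrong would give the usual $(\mu-\sigma_0^2/\lambda_i)^+$ instead. Second, with $\bm{G}$ singular I cannot invert it and must instead rely on the asymptotic Szeg\H{o}/Gray sampling of the generating function to show the eigenvalues vanish outside $|i|<Z/2$ and to handle the zero-gain directions cleanly; the boundary behaviour of $G$ near $|f|=(1+\beta)/(2T)$ and the finite-$N$ smoothing of the sharp cutoff are where I would be most careful.
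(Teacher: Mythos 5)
Your proof is correct, and it reaches the theorem by a genuinely different route than the paper. The paper's own proof does no optimization at all: it observes that some $\lambda_i$ vanish for $\delta<\frac{1}{1+\beta}$ (so the $\bm{G}^{-1}$ formula of Theorem~1 is undefined), then simply imports the already-derived optimal frequency spectrum $\phi(f_n)$ from \cite[(25)]{zhang2024capacitypapranalysismimo} and samples it $N$ times at $f_n=\frac{i}{N}$ to obtain the diagonal entries $\Tilde{\lambda}_i$, with the DFT eigenvector structure of $\bm{G}$ doing the rest. You instead re-derive the optimum from first principles: diagonalize in the DFT basis, write the parallel-channel mutual information $\sum_i\log(1+\lambda_i\sigma_i/\sigma_0^2)$, and---crucially---identify the FTN transmit-power constraint as the $\bm{G}$-weighted trace $\mathrm{tr}(\bm{\Sigma}\bm{G})=NP_k\delta T$, which turns water-filling into equalized received power $u_i=\lambda_i\sigma_i=c$ and hence the inverse-gain ($1/G$) allocation; counting the $Z=N\delta(1+\beta)$ active modes gives $c=\frac{P_kT}{1+\beta}$ and reproduces \eqref{eqn:lamboptisd} exactly, constant included. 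Your approach buys three things: it is self-contained (no reliance on the external spectrum result), it explains structurally why the answer is channel inversion rather than classical water-filling (the weighted trace constraint is the whole story), and it unifies the two regimes, since for $\frac{1}{1+\beta}\leq\delta\leq 1$ all modes are active and your formula collapses to $\bm{\Sigma}=P_k\delta T\bm{G}^{-1}$ of Theorem~1. The paper's route buys brevity and avoids re-justifying the commutation/KKT steps by leaning on the published frequency-domain optimum. The asymptotic caveats you flag (Szeg\H{o}-type eigenvalue sampling, support boundary, finite-$N$ smoothing) are glossed over equally by the paper, so you are not at a rigor deficit there.
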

\begin{proof}
When $\delta<\frac{1}{1+\beta}$, the support of the spectrum $\mathcal{G}(\bm{G})$ in \eqref{eqn:genfunc} in one period is in $[-\frac{\delta(1+\beta)}{2}, \frac{\delta(1+\beta)}{2}]$. Then, some of the samples on the spectrum, or equivalently some of the $\lambda_i$'s will be zero, leading to $\lambda_i^{-1}$ and $\bm{\Sigma}$ in \eqref{eqn:sdoptcovmat} being undefined. This is why we need a new solution method. 

To alleviate the above mentioned problem and to compute $\lambda_i^{-1}$, we use the optimum input frequency spectrum derived in \cite[(25)]{zhang2024capacitypapranalysismimo}. The optimum input spectrum is given as 
\begin{equation}
    \phi(f_n)=\begin{cases}
        \frac{ T}{G(\frac{f_n}{\delta T})(1+\beta)},&~f_n\in\left[-\frac{\delta(1+\beta)}{2}, \frac{\delta(1+\beta)}{2}\right] \\
        0,&~f_n\in \left[-\frac{1}{2}, -\frac{\delta(1+\beta)}{2}\right] \bigcup \left[\frac{\delta(1+\beta)}{2}, \frac{1}{2}\right]
    \end{cases}
\end{equation}
for $\delta<\frac{1}{1+\beta}$.
Then we evenly sample $\phi(f_n)$ in $[-\frac{1}{2}, \frac{1}{2}]$  $N$ times to get the entries of $\bm{\Lambda}^{-1}$. Assume that the number of non-zero samples is $Z$, then   
\begin{align}
\Tilde{\lambda}_i=\phi(f_n)|_{f_n=\frac{i}{N}}=\begin{cases}
        \frac{  T}{G(\frac{i}{N\delta T})(1+\beta)},& |i|<\frac{Z}{2}  \\
        0,&\frac{Z}{2}\leq|i|\leq M
    \end{cases}, 
\end{align}
for $\delta<\frac{1}{1+\beta}$.
\end{proof}

\begin{corollary}
     When $SNR_{rx}$ is fixed, we can obtain the optimal covariance matrix by simply replacing $P_k$ with $\frac{E}{\delta T}$ in \eqref{eqn:bdoptmat} and \eqref{eqn:sdoptcovmat}.
\end{corollary}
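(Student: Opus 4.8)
The plan is to recognize that the capacity maximization producing \eqref{eqn:bdoptmat} and \eqref{eqn:sdoptcovmat} depends on the power constraint only through a single scalar power level, and that switching from a fixed $SNR_{tx}$ to a fixed $SNR_{rx}$ merely re-expresses that scalar. Since the noise PSD $\sigma_0^2$ and the matrix $\bm{G}$ (hence both the objective and the feasible directions) are unchanged, the previously derived optimizers transfer after a single reparametrization.

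First I would write the transmit-power constraint explicitly in terms of the covariance matrix. A short computation gives the expected signal energy $\mathbb{E}\left[\int|x_k(t)|^2\,dt\right]=\mathrm{tr}(\bm{G}\bm{\Sigma})$, so that fixing the transmit power $P_k=\mathrm{tr}(\bm{G}\bm{\Sigma})/(N\delta T)$ is the constraint $\mathrm{tr}(\bm{G}\bm{\Sigma})=N P_k\delta T$; this is consistent with \eqref{eqn:bdoptmat}, for which $\mathrm{tr}(\bm{G}\,P_k\delta T\,\bm{G}^{-1})=N P_k\delta T$. The objective, the FTN mutual information written as a function of $\bm{\Sigma}$ and $\sigma_0^2$, carries no explicit dependence on which SNR is held fixed.

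Next I would translate the fixed-$SNR_{rx}$ requirement into the same form. From $SNR_{rx}=P_k\delta/\sigma_0^2$ and the symbol energy $E=P_k\delta T$, holding $SNR_{rx}$ fixed (with $\sigma_0^2$ a fixed channel parameter) is equivalent to holding $E$ fixed, which sets the transmit power to $P_k=E/(\delta T)$. Substituting this into the constraint yields the identical optimization with budget $N E$ in place of $N P_k\delta T$, and with the objective and feasible set unchanged. Its optimizer is therefore obtained from the fixed-$SNR_{tx}$ optimizer by the substitution $P_k\mapsto E/(\delta T)$, which is exactly the claim; concretely this gives $\bm{\Sigma}=E\bm{G}^{-1}$ in the moderate regime and $\bm{\Sigma}=E\bm{D}\bm{\Lambda}^{-1}\bm{D}^\dagger$ in the small regime.

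The step I expect to require the most care is verifying that the power level enters the optimizer only through the linear prefactor, so that the substitution propagates without altering the structure. In the moderate-$\delta$ case this is immediate from the explicit linearity of \eqref{eqn:bdoptmat} in $P_k$. In the small-$\delta$ case I would confirm that the spectral support fixed by the zeros of the generating function \eqref{eqn:genfunc} --- and therefore the count $Z$ and the shape of $\bm{\Lambda}^{-1}$ in \eqref{eqn:lamboptisd} --- is governed solely by $\delta$ and $\beta$ and is independent of the power level, so that replacing $P_k$ by $E/(\delta T)$ rescales only the overall power while leaving the support, and hence the optimality, intact.
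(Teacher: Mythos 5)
Your proposal is correct and takes essentially the same route as the paper: the paper states this corollary without any proof, treating it as immediate from the relation $E = P_k\delta T$ (equivalently, fixed $SNR_{rx}$ means fixed symbol energy), which is precisely the reparametrization at the core of your argument. Your extra verifications --- that the constraint enters only through the scalar budget, that the optimizers \eqref{eqn:bdoptmat} and \eqref{eqn:sdoptcovmat} are linear in that budget, and that the spectral support in \eqref{eqn:lamboptisd} depends only on $\delta$ and $\beta$ --- simply make explicit the justification the paper leaves implicit.
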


\subsection{Average CCDF for FTN with Gaussian Symbols}

In the previous subsection, we have calculated all the $\lambda_i$ values. Using this information, in this subsection, we analyze PAPR for $\delta<\frac{1}{1+\beta}$ and study fixed transmit SNR and fixed received SNR cases individually.  

\subsubsection{Fixed transmit SNR}
Assume that the data symbols $a_k[n]$ are drawn from a complex Gaussian distribution, and  $a_{r,k}[n]$ and $a_{i,k}[n]$ are independent of each other. First of all, we observe that $x_k(t)$ is a cyclostationary Gaussian process, since for any time instant $\tau$, $x_k(\tau)$ is a linear combination of Gaussian symbols. Moreover, $x_{r,k}(t)$ and $x_{i,k}(t)$ are also Gaussian processes. Assume that the covariance matrices for $\bm{a}_{r,k}$ and $\bm{a}_{i,k}$ are $\bm{\Sigma}_{r}$ and $\bm{\Sigma}_{i}$ respectively. 
The variance of $|x_k(\tau)|^2$, which is  $|x_k(t)|^2$ evaluated at time $\tau$ is given by 
\begin{align}
    \mathbb{E}\left[|x_k(\tau)|^2\right]&=\mathbb{E}\left[x^2_{r,k}(\tau)\right]+\mathbb{E}\left[x^2_{i,k}(\tau)\right] \\
    &=\mathbb{E}\left[\left(\sum_{m=-M}^Ma_{r,k}p(\tau-m\delta T)\right)^2\right]\notag\\
    &\quad\quad~+\mathbb{E}\left[\left(\sum_{m=-M}^Ma_{i,k}p(\tau-m\delta T)\right)^2\right]\\
    &=\mathbb{E}\left[\left(\bm{p}_\tau^T\bm{a}_{r,k}\right)^2\right] +\mathbb{E}\left[\left(\bm{p}_\tau^T\bm{a}_{i,k}\right)^2\right]\\
    &=\bm{p}_\tau^T\mathbb{E}[\bm{a}_{r,k}\bm{a}_{r,k}^T]\bm{p}_\tau + \bm{p}_\tau^T\mathbb{E}[\bm{a}_{i,k}\bm{a}_{i,k}^T]\bm{p}_\tau\\
    &=\bm{p}^T_\tau\bm{\Sigma}_{r}\bm{p}_\tau + \bm{p}_\tau^T\bm{\Sigma}_{i}\bm{p}_\tau,
\end{align}
where $\bm{p}_\tau=[p(\tau+M\delta T), p(\tau+(M-1)\delta T),\dots,p(\tau-M\delta T)]$.
We also know that 
\begin{align}
    \mathbb{E}\left[|x_k(\tau)|^2\right]&=\mathbb{E}\left[\left(\sum_{m=-M}^Ma_{k}p(\tau-m\delta T)\right)^2\right]\\
    &=\bm{p}^T_\tau\bm{\Sigma}\bm{p}_\tau.
\end{align}
We can see that the variance of $|x_k(\tau)|^2$ is composed of the variance of $x_{r,k}^2(\tau)$ and $x_{i,k}^2(\tau)$, the variance of $x_{r,k}^2(\tau)$ and $x_{i,k}^2(\tau)$ are the average power allocated to the real and imaginary parts of the complex channel.  In rich-scattering environments, the communication channel can be modeled as the Rayleigh fading channel, which is a complex channel with equal weights in its real and imaginary parts. Therefore, without loss of generality, we can assume that the real part and the imaginary part of the signal $x_k(t)$ have the same distribution, and $\bm{p}^T\bm{\Sigma}_{r}\bm{p}=\bm{p}^T\bm{\Sigma}_{i}\bm{p}$.
As a result, we can see that random variable $|x_k(\tau)|^2$ is also Rayleigh distributed, and the CCDF of the instantaneous power is given as 
\begin{align}
    \bar{\mathcal{C}}(\gamma)=\frac{1}{\delta T}\int_0^{\delta T} \text{exp}\left(-\frac{\gamma}{\bm{p}^T_\tau\bm{\Sigma}\bm{p}_\tau}\right)d\tau. \label{eqn:ccdfgausdiscrete}
\end{align}
Then, we can plug \eqref{eqn:sdoptcovmat} into \eqref{eqn:ccdfgausdiscrete} and get 
\begin{align}
    \bar{\mathcal{C}}(\gamma)=\frac{1}{\delta T}\int_0^{\delta T} \text{exp}\left(-\frac{\gamma}{P_k\delta T\bm{p}^T_\tau\bm{D}\bm{\Lambda}\bm{D}^\dagger\bm{p}_\tau}\right)d\tau. \label{eqn:origccdf}
\end{align}
Notice that the multiplication $\bm{p}_\tau^T\bm{D}$ is equivalent to performing DFT on the vector $\bm{p}_\tau$. We call this product $\bm{q}_\tau$. Since we have sufficiently large $N$,  the transform becomes  
\begin{align}
    (\bm{q}_\tau)_n&=\frac{1}{\sqrt{N}}\sum_{m=-M}^Mp(\tau-m\delta T)e^{-j2\pi\frac{mn}{N}} \\
    &=\frac{1}{\sqrt{N}\delta T}\sum_{v=-\infty}^\infty\sqrt{G\left(\frac{n/N-v}{\delta T}\right)}e^{j2\pi\frac{n/N-v}{\delta T}\tau}. \label{eqn:29}
\end{align}
In \eqref{eqn:29}, we observe that the term 
\begin{equation}
    \frac{1}{\delta T}\sum_{v=-\infty}^\infty\sqrt{G\left(\frac{f_n-v}{\delta T}\right)}e^{j2\pi\frac{f_n-v}{\delta T}\tau}, f_n\in\left[-\frac{1}{2}, \frac{1}{2}\right],\notag
\end{equation} is the sum of shifted and scaled versions of $G(f)$. The vector $\bm{q}_\tau$ is obtained by evenly sampling the spectrum in one period $N$ times.  Since $\delta<\frac{1}{1+\beta}$, we then have
\begin{align}
    \left(\bm{q}_\tau\right)_n&=\frac{1}{\sqrt{N}\delta T}\sqrt{G\left(\frac{n}{N\delta T}\right)}e^{j2\pi\frac{n}{N\delta T}\tau}.\label{eqn:qtau}
\end{align}
In the meantime, using the relationship $\bm{p}^T_\tau\bm{D}=\bm{q}_\tau$, and plugging \eqref{eqn:qtau} and \eqref{eqn:lamboptisd} into \eqref{eqn:origccdf} to get
\begin{align}
    \lefteqn{\bar{\mathcal{C}}(\gamma)=\frac{1}{\delta T}\int_0^{\delta T} \text{exp}\left(-\frac{\gamma}{P_k\delta T\sum_{n=-M}^{M}|\left(\bm{q}_\tau\right)_n|^2\lambda_n}\right)d\tau} \\
    &=\frac{1}{\delta T}\int_0^{\delta T} \text{exp}\left(-\frac{\gamma}{\frac{P_k\delta T}{N(\delta T)^2}\sum_{n=-M}^{M}G\left(\frac{n}{N\delta T}\right)\frac{(1+\beta)T}{G(\frac{n}{N\delta T})}}\right)d\tau. \label{eqn:unfinished}
    \end{align}
    Note that as $\delta<\frac{1}{1+\beta}$, there will be some samples $G(\frac{n}{N\delta T})$ which are zero. Denoting the number of non-zero samples with $Z$,  \eqref{eqn:unfinished} becomes 
    \begin{align}
    \bar{\mathcal{C}}(\gamma)&=\frac{1}{\delta T}\int_0^{\delta T} \text{exp}\left(-\frac{\gamma}{\frac{P_k}{\delta(1+\beta)}\frac{Z}{N}}\right)dt\\
    &\overset{(a)}{\approx}\frac{1}{\delta T}\int_0^{\delta T} \text{exp}\left(-\frac{\gamma}{P_k}\right)dt\\
    &=\text{exp}\left(-\frac{\gamma}{P_k}\right).
\end{align}
As $N$ is sufficiently large, the ratio between the number of non-zero samples $Z$ and the number of total samples $N$ can approximate the ratio between the length of the support of $G(\frac{f_n}{\delta T})$, which is $\delta(1+\beta)$, and its period 1. Therefore, in (a), the ratio $\frac{Z}{N}$ can be approximated as $\delta(1+\beta)$. As we can see, when $\delta<\frac{1}{1+\beta}$, the average CCDF for instantaneous power is irrelevant to the value of $\delta$. According to \cite[Remark 4]{zhang2024capacitypapranalysismimo}, the average CCDF of instantaneous power has the same behavior as the average CCDF of PAPR. So we have the following theorem.
\begin{theorem}
\label{thm:thm1}
    If $SNR_{tx}$ is fixed and Gaussian symbols are used, under the optimal power allocation scheme, as $\delta$ approaches zero, the average CCDF of PAPR, $\bar{\mathcal{C}}(\gamma)$,  does not change with $\delta$ and is equal to  
    \begin{equation}
        \bar{\mathcal{C}}(\gamma)=\text{exp}\left(-\frac{\gamma}{P_k}\right), \delta<\frac{1}{\beta}.\label{eqn:thm1eq}
    \end{equation}
\end{theorem}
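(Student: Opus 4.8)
The plan is to specialize the Gaussian-symbol CCDF \eqref{eqn:ccdfgausdiscrete} to the optimal covariance matrix and to show that the resulting exponent is simultaneously free of $\tau$ and of $\delta$ throughout the regime $\delta<\frac{1}{1+\beta}$ (which contains the small-$\delta$ limit of interest). First I would substitute the optimal covariance \eqref{eqn:sdoptcovmat} into the quadratic form $\bm{p}_\tau^T\bm{\Sigma}\bm{p}_\tau$ appearing in the denominator of the exponent, obtaining \eqref{eqn:origccdf}, and rewrite $\bm{p}_\tau^T\bm{D}$ as the $N$-point DFT $\bm{q}_\tau$ of the sampled pulse vector. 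The crucial structural fact is that for $\delta<\frac{1}{1+\beta}$ the shifted copies of $G(\cdot)$ in the aliasing (Poisson-summation) form of $\bm{q}_\tau$ do not overlap, so each DFT bin collapses to the single term in \eqref{eqn:qtau}, giving $|(\bm{q}_\tau)_n|^2=\frac{1}{N(\delta T)^2}G(\frac{n}{N\delta T})$, whose only $\tau$-dependence carries unit modulus and therefore drops out of the power.

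Next I would pair this with the optimal eigenvalues $\tilde{\lambda}_n$ of \eqref{eqn:lamboptisd}. The load-bearing observation is a clean cancellation: the product $|(\bm{q}_\tau)_n|^2\tilde{\lambda}_n$ reduces to the constant $\frac{1}{N\delta^2 T(1+\beta)}$, independent of both $n$ and $\tau$, because the factor $G(\frac{n}{N\delta T})$ in $|(\bm{q}_\tau)_n|^2$ exactly cancels the $1/G(\frac{n}{N\delta T})$ in $\tilde{\lambda}_n$. Summing over the $Z$ indices that carry non-zero eigenvalues then turns the exponent denominator into $\frac{P_k}{\delta(1+\beta)}\frac{Z}{N}$, matching the expression obtained just before step (a).

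Finally I would invoke an asymptotic counting argument: as $N\to\infty$, the fraction $Z/N$ of non-zero spectral samples converges to the ratio of the support length $\delta(1+\beta)$ of $\phi$ to the full period $1$, so $Z/N\to\delta(1+\beta)$ and the exponent denominator simplifies to $P_k$. Since the integrand is now constant in $\tau$, the normalized integral over $[0,\delta T)$ is trivial and yields $\bar{\mathcal{C}}(\gamma)=\exp(-\gamma/P_k)$, which is manifestly independent of $\delta$. Converting this instantaneous-power statement into the PAPR statement of the theorem then follows from the CCDF equivalence in \cite[Remark 4]{zhang2024capacitypapranalysismimo} together with the definition \eqref{eqn:defpapr}.

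The main obstacle I anticipate is the rigorous justification of the two $N\to\infty$ approximations rather than the algebra: (i) that each DFT bin collapses exactly to a single spectral sample, which relies on the non-overlap of shifted $G$-copies guaranteed by $\delta<\frac{1}{1+\beta}$ and on the sampling error being negligible for large $N$, and (ii) that $Z/N\to\delta(1+\beta)$. Both rest on Szeg\H{o}/Toeplitz asymptotics and on the band-limited structure of $G$, and the delicate part is controlling the error terms uniformly in $\tau$ so that the $\tau$-average survives the limit; by contrast, the cancellation of $G$ against $1/G$ is the clean step that makes the exponent exactly flat in both $n$ and $\tau$.
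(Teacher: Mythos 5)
Your proposal is correct and follows essentially the same route as the paper: substituting the optimal covariance \eqref{eqn:sdoptcovmat} into \eqref{eqn:ccdfgausdiscrete}, collapsing each DFT bin of $\bm{q}_\tau$ to a single aliasing term via non-overlap for $\delta<\frac{1}{1+\beta}$, cancelling $G$ against $1/G$ in the eigenvalues, invoking $Z/N\approx\delta(1+\beta)$, and finishing with \cite[Remark 4]{zhang2024capacitypapranalysismimo}. Your per-term constant $\frac{1}{N\delta^2 T(1+\beta)}$ is the corrected form of the paper's \eqref{eqn:unfinished} (where the $(1+\beta)$ factor is misplaced), so your algebra matches the paper's subsequent line $\frac{P_k}{\delta(1+\beta)}\frac{Z}{N}$ exactly.
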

    

\subsubsection{Fixed received SNR}

When \( SNR_{rx} \) remains constant, we can substitute \( P_k \delta T \) with \( E \). This alteration causes the average CCDF of instantaneous power to exhibit distinct characteristics compared to the fixed \( SNR_{tx} \) case as stated next. 

\begin{corollary}
If $SNR_{rx}$ is fixed and Gaussian symbols are used, under the optimal power allocation scheme, the average CCDF of PAPR, $\bar{\mathcal{C}}(\gamma)$,  becomes
\begin{equation}
        \bar{\mathcal{C}}(\gamma)=\text{exp}\left(-\frac{\gamma\delta T}{E}\right), \delta<\frac{1}{\beta},\label{eqn:rxsnr}
    \end{equation}
and as $\delta$
approaches zero, $\bar{\mathcal{C}}(\gamma)$ converges to 1.
\end{corollary}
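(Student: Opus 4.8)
The plan is to derive the claim as an immediate consequence of Theorem~\ref{thm:thm1} combined with the substitution rule established earlier—that for fixed $SNR_{rx}$ the optimal covariance matrix is obtained from \eqref{eqn:sdoptcovmat} by replacing $P_k$ with $\frac{E}{\delta T}$ (equivalently, $P_k\delta T$ with $E$). Because $P_k$ enters the fixed-$SNR_{tx}$ argument only as a multiplicative prefactor that is independent of the summation index $n$, I expect this replacement to pass cleanly through every manipulation leading to \eqref{eqn:thm1eq}, so the core of the proof is simply to re-run those steps with the modified covariance and to confirm that the asymptotic simplifications are left intact.

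Concretely, I would begin from the instantaneous-power CCDF \eqref{eqn:origccdf}, now with covariance $\bm{\Sigma}=E\,\bm{D}\bm{\Lambda}^{-1}\bm{D}^\dagger$. Writing $\bm{p}_\tau^T\bm{D}=\bm{q}_\tau$ with $\bm{q}_\tau$ as in \eqref{eqn:qtau} and the diagonal weights from \eqref{eqn:lamboptisd}, the factors $G\!\left(\frac{n}{N\delta T}\right)$ appearing in $|(\bm{q}_\tau)_n|^2$ and in $\tilde{\lambda}_n$ cancel exactly as in the fixed-$SNR_{tx}$ case; the sole difference is that the prefactor $P_k\delta T$ in the denominator of the exponent is now $E$. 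Retaining only the $Z$ nonzero terms and invoking the same approximation $\frac{Z}{N}\approx\delta(1+\beta)$, the mean instantaneous power $\bm{p}_\tau^T\bm{\Sigma}\bm{p}_\tau$ collapses to $\frac{E}{\delta T}$ independently of $\tau$. Integrating the exponential over one period then yields $\bar{\mathcal{C}}(\gamma)=\exp\!\left(-\frac{\gamma\delta T}{E}\right)$, which is \eqref{eqn:rxsnr}; equivalently, one substitutes $P_k\to\frac{E}{\delta T}$ directly into \eqref{eqn:thm1eq}.

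For the limiting behavior I would note that, for any fixed threshold $\gamma$, the exponent $-\frac{\gamma\delta T}{E}\to 0$ as $\delta\to 0$, so $\bar{\mathcal{C}}(\gamma)\to 1$. Intuitively, fixing $SNR_{rx}$ forces the transmit power $P_k=\frac{E}{\delta T}$ to diverge as $\delta\to0$; the mean instantaneous power therefore grows without bound and surpasses any fixed $\gamma$ with probability approaching one.

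I do not anticipate a substantive obstacle, since the result is essentially a reparametrization of Theorem~\ref{thm:thm1}. The only point deserving explicit attention is that the substitution commutes with the two asymptotic steps used there—the pointwise cancellation of $G\!\left(\frac{n}{N\delta T}\right)$ and the replacement $\frac{Z}{N}\approx\delta(1+\beta)$—which holds because $P_k$, and hence $\frac{E}{\delta T}$, sits outside the sum over $n$ and does not interact with either limit. I would state this commutation explicitly rather than reproduce the cancellations.
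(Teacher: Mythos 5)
Your proposal is correct and follows essentially the same route as the paper: the paper's proof simply substitutes $P_k=\frac{E}{\delta T}$ into \eqref{eqn:thm1eq} to obtain \eqref{eqn:rxsnr} and then reads off the limit as $\delta\to 0$. Your additional verification that the substitution commutes with the cancellation of $G\!\left(\frac{n}{N\delta T}\right)$ and the approximation $\frac{Z}{N}\approx\delta(1+\beta)$ is a careful elaboration of the same argument, not a different approach.
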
 
\begin{proof}
    We plug in the relation $P_k=\frac{E}{\delta T}$ into \eqref{eqn:thm1eq} and get \eqref{eqn:rxsnr}. From \eqref{eqn:rxsnr} we can see easily that as $\delta$ approaches 0, the average CCDF $\bar{\mathcal{C}}(\gamma)$ approaches 1.
\end{proof}
In summary, for FTN signaling using Gaussian symbols with a fixed \( SNR_{rx} \), the average CCDF curve for instantaneous power becomes a horizontal line as \( \delta \to 0 \). Similarly, according to \cite[Remark 4]{zhang2024capacitypapranalysismimo}, the behavior of average CCDF for PAPR is the same as the behavior of average CCDF for instantaneous power. We conclude that the average CCDF curve for PAPR also approaches a horizontal line as \( \delta \to 0 \).

\section{Simulation Results}

To evaluate the feasibility of FTN signaling, we perform simulations on the PAPR performance for various acceleration rates. Additionally, we present both the empirical and theoretical average CCDF of the PAPR given in \eqref{eqn:defccdf} for optimal power allocation for small acceleration factors. 
In addition, we compare these results with three other power allocation schemes suggested in \cite{zhang2024capacitypapranalysismimo}. These schemes are time inverse power allocation, uniform power allocation in frequency, and uniform power allocation. 
Time inverse power allocation means that $\bm{G}^{-1}$ precoding is performed against ISI due to FTN, but uniform power allocation is assumed among transmit antennas. Uniform power allocation in frequency assumes that the covariance matrix  $\bm{\Sigma}$ is identity matrix, but optimal power allocation, i.e., waterfilling, is used among transmit antennas.
Finally, uniform power allocation implies that the covariance matrix $\bm{\Sigma}$ is identity matrix, and there is uniform power allocation among transmit antennas.
We also compare our MIMO simulations with the theoretical analysis we find in \eqref{eqn:thm1eq} and \eqref{eqn:rxsnr} for SISO.

\begin{figure}[t]
    \centering
    \includegraphics[scale=0.57]{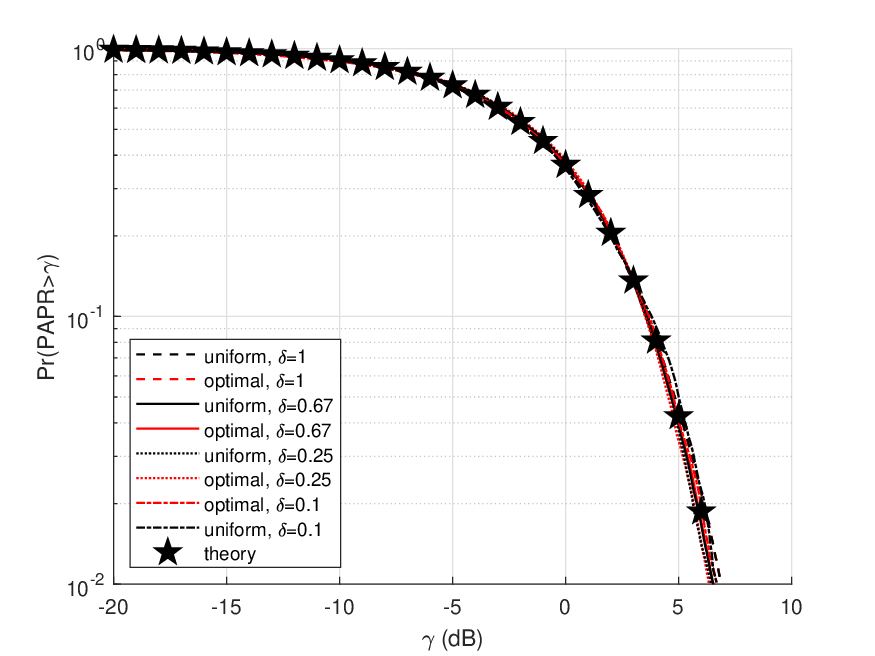}
    \caption{CCDF of average PAPR of SISO FTN signaling with uniform and optimal power allocation schemes for different $\delta$ values. Transmitted SNR is fixed and symbols are drawn from a Gaussian distribution with zero mean and unit variance.}
    \label{fig:ccdfsisogaus}
\end{figure}

In the simulations we plot \eqref{eqn:defccdf} when the symbol period is set to \( T = 0.01 \), and the results are averaged over 1000 random channel realizations. The MIMO channel coefficients \( h_{lk} \) are modeled as independent, complex Gaussian random variables with a distribution of \( \mathcal{CN}\left(0, \frac{1}{K}\right) \). We transmit 2000 symbols using square root-raised cosine pulses with roll-off factor $\beta$ for pulse shaping. 
\begin{figure}[t]
    \centering
    \includegraphics[scale=0.57]{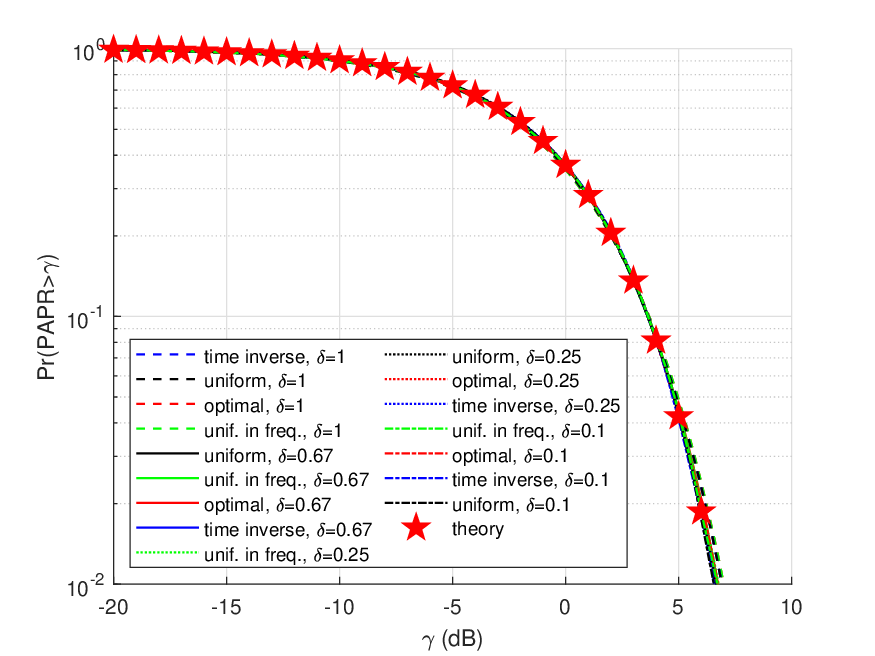}
    \caption{CCDF of PAPR of 20$\times$20 MIMO FTN signaling with different power allocation schemes and $\delta$ values. Transmitted SNR is fixed and symbols are drawn from Gaussian distribution with zero mean and unit variance. }
    \label{fig:ccdf20mimogaus}
\end{figure}

In Fig. \ref{fig:ccdfsisogaus}, we obtain the CCDF of SISO FTN using Gaussian symbols with fixed $SNR_{tx}$, which means the symbols before precoding are generated with distribution $\mathcal{CN}(0, P\delta T)$.  The transmitted signal $x(t)$ in SISO FTN also has the form of \eqref{eqn:defxt}, namely, $x(t)=\sum_{n=0}^{N-1}a[n]p(t-n\delta T)$. From the figure we can see that the value of $\delta$ has no effect on the PAPR behavior of SISO FTN as depicted in Theorem~\ref{thm:thm1}.    After we apply precoding to the input symbols, the resulting process has the same power levels. Moreover, the resulting process is still a Gaussian process since we use linear precoding. Therefore, optimal power allocation overlaps with uniform power allocation scheme.  Since the symbols are Gaussian distributed, all the curves are packed together as expected. 

In Fig. \ref{fig:ccdf20mimogaus}, we investigate the PAPR performance of MIMO FTN with Gaussian symbols for fixed $SNR_{tx}$. We can see that for MIMO FTN as well, $\delta$ has no influence on the PAPR performance.    By comparing Fig.~\ref{fig:ccdf20mimogaus} with Fig.~\ref{fig:ccdfsisogaus} we can see that the results for MIMO overlap with the SISO results. Therefore, the number of antennas has no effect on the PAPR performance as discussed in Section \ref{sec:mimonoeffect}.

\begin{figure}[t]
    \centering
    \includegraphics[scale=0.57]{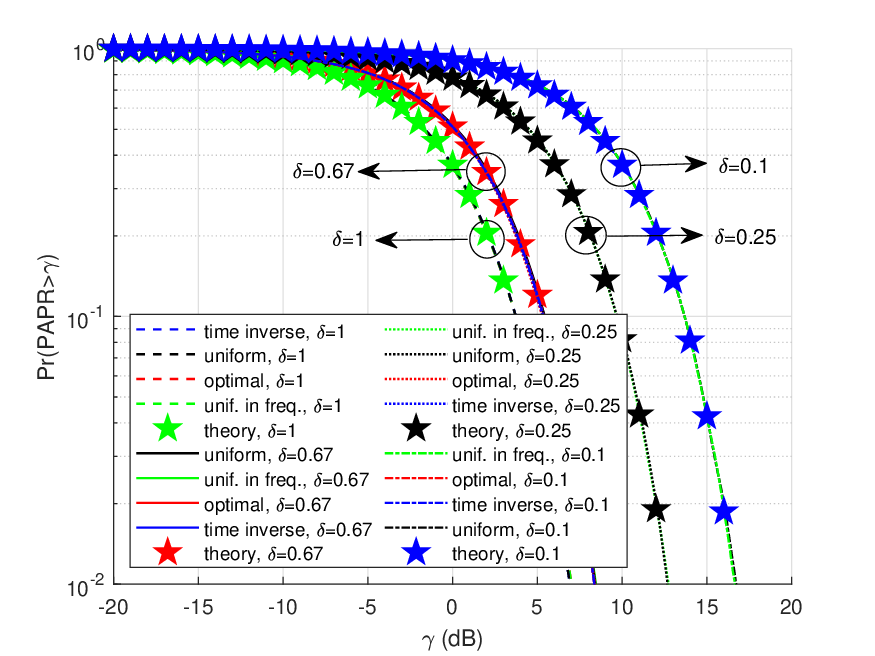}
    \caption{CCDF of PAPR of 20$\times$20 MIMO FTN signaling with different power allocation schemes and $\delta$ values. Received SNR is fixed and symbols are drawn from Gaussian distribution with zero mean and unit variance.}
    \label{fig:ccdf20mimogausrx}
\end{figure}

In Fig.~\ref{fig:ccdf20mimogausrx}, we study the PAPR performance of MIMO FTN with fixed $SNR_{rx}$. First, we observe that MIMO simulations coincide with the theory computed for SISO. Secondly  we  see that the PAPR performance gets worse as $\delta$ decreases. Keeping the symbol energy constant increases the  power level at the transmitter, leading to a severe PAPR increase. On the other hand, we notice that for the same $\delta$ value, the average CCDF of PAPR for all kinds of power allocations overlap with each other. This aligns with what we observed in the fixed $SNR_{tx}$ cases. After adjusting the input distribution, the process is still Gaussian. Therefore for the same $\delta$, power allocation scheme does not change the PAPR distribution. 

\section{Conclusion}
In this paper, we study the PAPR behavior of MIMO FTN signaling using Gaussian symbols for fixed transmit SNR and for fixed received SNR. The optimal input distribution is used and we observe that the PAPR behavior for optimal power allocation resembles that of uniform power allocation for fixed $SNR_{tx}$. On the other hand, we find that for fixed $SNR_{rx}$, the PAPR performance becomes unbounded. Moreover, the PAPR behavior only depends on the acceleration rate and the power constraint but not on the power allocation scheme.
For future works, we intend to investigate the PAPR behavior of MIMO FTN with practical constellations such as QPSK using non-uniform power allocation schemes.










\bibliographystyle{IEEEtran}

\bibliography{main}

\end{document}